\documentclass[11pt]{article}

\usepackage[margin=1in]{geometry}
\usepackage[T1]{fontenc}
\usepackage{times}
\usepackage{microtype}
\usepackage{natbib}
\usepackage{amsmath}
\usepackage{amssymb}
\usepackage{amsthm}
\usepackage{mathtools}
\usepackage{booktabs}
\usepackage{graphicx}
\usepackage{xcolor}
\usepackage{enumitem}
\usepackage{hyperref}
\usepackage[nameinlink,capitalise]{cleveref}

\hypersetup{
  colorlinks=true,
  linkcolor=blue!55!black,
  citecolor=green!40!black,
  urlcolor=blue!65!black,
  pdftitle={A Rank-One Popularity Component in Dot-Product Recommender Scores: Population Theory and Prior-Separation Evidence},
  pdfauthor={Yang Cheng},
  pdfsubject={Long-tail recommendation, representation anisotropy, and prior-adjusted softmax},
  pdfkeywords={recommender systems, anisotropy, long tail, softmax, representation degeneration}
}

\newtheorem{proposition}{Proposition}
\newtheorem{corollary}{Corollary}
\newtheorem{remark}{Remark}
\newcommand{\R}{\mathbb{R}}
\newcommand{\V}{\mathcal{V}}
\newcommand{\C}{\mathcal{C}}
\newcommand{\E}{\mathbb{E}}
\newcommand{\one}{\mathbf{1}}
\newcommand{\softmax}{\operatorname{softmax}}
\newcommand{\PMI}{\operatorname{PMI}}
\newcommand{\erank}{\operatorname{erank}}

\title{A Rank-One Popularity Component in Dot-Product Recommender Scores:\\
Population Theory and Prior-Separation Evidence}

\author{
Yang Cheng\\
Independent Researcher\\
\texttt{wxwidget@gmail.com}
}
\date{}

\begin{document}

\maketitle

\begin{abstract}
Representation anisotropy in recommenders is often attributed to Transformer architecture. We show that a more general source lies in the conditional training distribution. For any encoder with a dot-product softmax decoder, the population-optimal score decomposes into pointwise mutual information, an item-marginal term $\log p(i)$, and a context-dependent gauge. After centering, the item marginal forms a context-shared rank-one score component, while time-varying marginals induce a low-rank popularity subspace. Transfer from scores to embeddings depends on factorization geometry, so this result does not imply universal embedding collapse.

We test the mechanism with synthetic data and public Alibaba/Tianchi interaction logs. In a matched intervention, separating $\log p(i)$ from the learned dot product reduces the measured context-shared popularity-aligned score energy by $98.6\%$; permutation tests confirm that the reduction is specific to the empirical popularity direction. The intervention also exposes an accuracy--coverage trade-off and incurs a small held-out likelihood cost. These results recast a class of apparent representation degeneration as a decoder-level consequence of long-tailed item marginals, independent of encoder architecture, while distinguishing the population mechanism from its non-identifying empirical proxy.
\end{abstract}

\section{Introduction}

Sequential recommendation systems encode a user's interaction history into a vector and retrieve candidate items by dot-product similarity. Transformer encoders such as SASRec and BERT4Rec have become standard choices for constructing the context representation \citep{kang2018sasrec,sun2019bert4rec}, but the same decoder is also used with recurrent, convolutional, pooling, and graph-based encoders. Across these systems, learned item representations can become anisotropic: singular values decay rapidly, unrelated items have unexpectedly high cosine similarity, and a small number of directions explain most embedding energy \citep{gao2019representation,qiu2022duorec}.

Prior work has described this effect as representation degeneration and has successfully mitigated it with contrastive regularization \citep{qiu2022duorec}. The empirical observation is important, but the usual architectural phrasing can obscure the underlying mechanism. The equation that produces the problematic frequency term does not contain a Transformer block. It contains a dot-product softmax trained on a long-tailed item distribution.

This paper isolates one such mechanism. Let $h_c$ be the representation produced by an arbitrary encoder for context $c$, and let $e_i$ be the embedding of item $i$. A softmax decoder uses $h_c^\top e_i$ as its logit. At the population optimum,
\begin{equation}
  h_c^\top e_i
  = \log p(i\mid c) + a_c
  = \PMI(c,i) + \log p(i) + a_c,
  \label{eq:intro-decomposition}
\end{equation}
where $a_c$ is an item-independent shift. The semantic term $\PMI(c,i)$ varies with context. The popularity term $\log p(i)$ does not. If the decoder has no separate prior channel, the factorized dot product must represent both.

The principle of separating popularity from compatibility appeared in our earlier dynamic popularity-aware (DPA) formulation for contrastive recommendation \citep{lin2021dpa}. Under noise-contrastive estimation (NCE), choosing the time-local item marginal as the noise distribution makes the optimal dot product estimate time-local PMI up to a constant; adding the contemporaneous log marginal at inference recovers the interaction score. That work targeted dynamic matching accuracy and validated the strategy on Alibaba interaction data. It did not ask what geometric structure a full-softmax model learns when the prior remains inside the dot product, nor when score structure transfers to embedding anisotropy. We address both questions and return to public Alibaba/Tianchi competition logs with score-level diagnostics designed for the proposed mechanism.

Stacking logits for all contexts and items makes the geometry visible. The popularity contribution is $\one f^\top$, where $f_i=\log p(i)$, and therefore has rank one. A Zipfian marginal makes the centered vector $f$ high-variance. When this component dominates the context-dependent semantic matrix, the score matrix develops a large first singular value. Under a balanced minimum-norm factorization, the same concentration appears in both context and item factors. The encoder architecture affects the semantic residual and optimization path, but it is not required for the rank-one pressure to exist.

We also clarify the role of non-target, often called negative-sample, gradients. For one training pair $(c,y)$,
\begin{equation}
  \nabla_{e_k}\ell(c,y)
  = \bigl(q_\theta(k\mid c)-\mathbb{I}[k=y]\bigr)h_c.
\end{equation}
Rare items receive few positive semantic corrections and many small non-target updates. In finite data this creates noisy, shared-direction trajectories. At the population optimum, however, positive and non-target terms cancel in expectation. The gradient account therefore describes how the statistical bias is realized during training, rather than a second root cause.

Our contributions are:
\begin{enumerate}[leftmargin=1.5em]
  \item For full-softmax dot-product recommenders, we identify the item marginal's gauge-invariant rank-one score component and generalize $T$ time-varying marginals to a rank-at-most-$T$ temporal popularity subspace.
  \item We establish a spectral-gap bound and show when score concentration transfers to item embeddings under balanced minimum-norm factorization.
  \item We reconcile the population-optimum and negative-gradient explanations, making explicit what each does and does not prove.
  \item Building on the prior-separation principle used in DPA, we derive a continuous prior-adjusted full-softmax objective and distinguish target-shaping NCE from proposal-corrected sampled softmax.
  \item We provide two reproducible matched studies: a controlled synthetic stress test with gauge-correct factorization diagnostics, and a three-seed study on Alibaba/Tianchi logs that measures an operational popularity-aligned score component, tests its direction specificity, and audits likelihood, retrieval, and exposure trade-offs.
\end{enumerate}

The resulting conclusion is deliberately narrower than ``Transformers are anisotropic'' and more actionable: a long-tail item marginal contributes a low-rank component to population-optimal dot-product softmax scores regardless of which encoder produced $h_c$. This is an existence claim about the score target. Architecture, optimization, and the data-generating exposure policy determine how strongly a corresponding direction appears in learned scores and embeddings.

\section{Related Work}

\paragraph{Representation anisotropy.}
Anisotropy has been documented in contextual language representations \citep{ethayarajh2019contextual} and in likelihood-trained language-generation embeddings, where vectors can occupy a narrow cone \citep{gao2019representation}. DuoRec identified analogous degeneration in sequential recommendation, connected rare-item updates to shared context directions, and used contrastive regularization to improve uniformity \citep{qiu2022duorec}. Our analysis starts from the same softmax decomposition but separates three notions that are sometimes conflated: spectral concentration of the optimal score matrix, anisotropy of a particular factorization, and finite-sample gradient dynamics.

\paragraph{Sequential recommendation.}
GRU4Rec introduced recurrent next-item modeling for session recommendation \citep{hidasi2016gru4rec}. SASRec and BERT4Rec later adopted self-attention and bidirectional masked modeling \citep{kang2018sasrec,sun2019bert4rec}. These encoders differ substantially, yet their retrieval heads commonly use item embeddings and dot products. This shared decoder motivates an encoder-independent analysis of the popularity channel.

\paragraph{Dynamic popularity separation.}
Our earlier DPA method used time-local item popularity as the NCE noise distribution, thereby learning a time-local PMI score, and restored current popularity during retrieval \citep{lin2021dpa}. Its contribution was an operational training--testing strategy for dynamic matching, supported by experiments on Alibaba e-commerce interactions. The present paper studies a different but connected question: under a full-softmax dot-product decoder, what spectral structure is created when popularity is internalized by the factors? We supply the gauge-invariant rank and spectrum analysis absent from DPA, then use the same data source for a matched score-geometry intervention rather than treating DPA's ranking gains as evidence for anisotropy.

\paragraph{Observed popularity and exposure bias.}
Recommendation logs are selected by user self-selection and prior serving policies, so an observed item marginal need not equal latent relevance \citep{schnabel2016treatments}. Work on missing-not-at-random implicit feedback uses propensity-aware objectives to target relevance under nonuniform exposure \citep{saito2020unbiased}, while causal embedding methods separate user interest from conformity toward popular items \citep{zheng2021dice}. Our estimand is different: we analyze the score geometry induced by the observed training distribution rather than identifying an exposure-free preference distribution. This boundary is why the empirical log-frequency direction is operational and policy-dependent.

\paragraph{Alignment and uniformity.}
\citet{wang2020uniformity} characterize contrastive representation learning through positive-pair alignment and hyperspherical uniformity. Recommendation methods including DuoRec, SimGCL, and DirectAU use related objectives to improve representation geometry \citep{qiu2022duorec,yu2022simgcl,wang2022directau}. We view uniformity as a useful geometric regularizer, but not as a substitute for modeling the item prior that enters the population-optimal logits.

\paragraph{Long-tail correction.}
Logit adjustment provides a statistically grounded treatment of long-tailed label marginals by adding or subtracting log priors during training or inference \citep{menon2021logit}. Balanced Softmax similarly modifies the normalization to account for class frequency \citep{ren2020balanced}. These ideas are usually presented for classification. We characterize their geometric consequence in factorized recommendation: prior adjustment removes a rank-one popularity component that would otherwise consume embedding capacity.

\paragraph{Matrix factorization and sampling.}
The connection between predictive objectives and matrix factorization has a long history, including the PMI interpretation of word embeddings \citep{levy2014neural}. Large-vocabulary models often rely on importance or sampled-softmax approximations \citep{bengio2008adaptive,jean2015using}. The role of the sampling distribution must be stated explicitly: proposal correction is required when approximating a fixed full-softmax target, whereas NCE methods such as DPA deliberately use the noise distribution to define a PMI-like target and then restore the prior at inference.

\section{Problem Setup}
\label{sec:setup}

Let $\C=\{1,\ldots,m\}$ denote a finite set of contexts and $\V=\{1,\ldots,n\}$ an item vocabulary. The data distribution is $\pi(c,i)$, with conditionals $\pi(i\mid c)$ and item marginal $\pi_i=\pi(i)$. We use \emph{popularity} operationally for this observed event marginal. It can combine exposure policy, availability, interface position, repeated behavior, and preference; the analysis does not identify an intrinsic item attribute or a causal preference distribution. We assume $\pi(i\mid c)>0$ for notational simplicity; zero-probability events can be handled by restricting to the support or by smoothing.

An arbitrary encoder $g_\theta$ maps context $c$ to $h_c=g_\theta(c)\in\R^d$. Each item has an embedding $e_i\in\R^d$. The decoder is
\begin{align}
  z_{ci} &= h_c^\top e_i, \\
  q_\theta(i\mid c)
  &= \frac{\exp(z_{ci})}{\sum_{j=1}^{n}\exp(z_{cj})}.
\end{align}
The population cross-entropy is
\begin{equation}
  \mathcal{L}_{\mathrm{CE}}
  = -\E_{(c,i)\sim\pi}\log q_\theta(i\mid c).
\end{equation}

We distinguish three objects:
\begin{enumerate}[leftmargin=1.5em]
  \item the unconstrained optimal logit matrix $Z^*\in\R^{m\times n}$;
  \item a finite-rank factorization $Z=HE^\top$, where rows of $H$ and $E$ are context and item representations;
  \item the optimization trajectory used to learn one such factorization.
\end{enumerate}
The distinction matters because softmax logits are invariant to row shifts and matrix factorizations are not unique. A claim about $Z^*$ does not automatically identify the geometry of every possible $E$.

\subsection{Anisotropy measures}

For an item embedding matrix $E\in\R^{n\times d}$ with singular values $s_1\geq\cdots\geq s_d\geq0$, we use
\begin{equation}
  R_1(E)=\frac{s_1^2}{\sum_{k=1}^{d}s_k^2}
\end{equation}
as the leading-direction energy ratio. We also report effective rank,
\begin{equation}
  \erank(E)
  = \exp\left(-\sum_{k=1}^{d}\rho_k\log\rho_k\right),
  \qquad
  \rho_k=\frac{s_k^2}{\sum_j s_j^2}.
\end{equation}
Large $R_1$ and small effective rank indicate spectral anisotropy. Mean pairwise cosine similarity measures cone concentration, but unlike the spectrum it is sensitive to translation and softmax gauge. We therefore treat cosine concentration as a related empirical symptom rather than an equivalent mathematical definition.

To remove item-independent row shifts, define the item-centering projector
\begin{equation}
  P_n=I_n-\frac{1}{n}\one_n\one_n^\top.
\end{equation}
The row-centered score matrix is $\widetilde Z=ZP_n$. This quantity is invariant to replacing any row $z_c$ with $z_c+a_c\one_n$.

\section{The Item Marginal Contributes a Rank-One Score Component}
\label{sec:theory}

\subsection{Population-optimal logits}

\begin{proposition}[Softmax optimum]
\label{prop:softmax-optimum}
For every context $c$, any unconstrained minimizer of the population cross-entropy satisfies
\begin{equation}
  z^*_{ci}=\log \pi(i\mid c)+a_c,
  \label{eq:optimal-logit}
\end{equation}
where $a_c\in\R$ is arbitrary and independent of $i$.
\end{proposition}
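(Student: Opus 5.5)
The population cross-entropy splits into independent per-context problems, and each one is a Gibbs-inequality argument. Since the logits $z_{ci}$ are unconstrained, there is no coupling across contexts, so I minimize each term separately. Conditioning on $c$,
\begin{equation*}
  \mathcal{L}_{\mathrm{CE}}
  = \sum_{c}\pi(c)\,\Bigl(-\sum_{i}\pi(i\mid c)\log q(i\mid c)\Bigr)
  = \sum_{c}\pi(c)\,\Bigl(H\bigl(\pi(\cdot\mid c)\bigr)+\mathrm{KL}\bigl(\pi(\cdot\mid c)\,\Vert\,q(\cdot\mid c)\bigr)\Bigr).
\end{equation*}
Here I assume every context has $\pi(c)>0$, restricting $\C$ to the support otherwise, as the setup already permits.

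The first step is to show that the softmax map $z_c\mapsto q(\cdot\mid c)$ reaches every strictly positive distribution on $\V$. Given a positive target $p$, the choice $z_{ci}=\log p_i$ gives $q=p$. By the standing assumption $\pi(i\mid c)>0$, this means the infimum of each KL term is $0$, and it is attained.

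The second step characterizes the minimizers. Gibbs' inequality says $\mathrm{KL}(p\Vert q)\ge 0$, with equality iff $q=p$. So $z_c$ is a minimizer exactly when $\softmax(z_c)=\pi(\cdot\mid c)$.

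The third step solves that equation for $z_c$. Write $Z_c=\sum_j\exp(z_{cj})$. The equation $\exp(z_{ci})/Z_c=\pi(i\mid c)$ gives $z_{ci}=\log\pi(i\mid c)+\log Z_c$, and I set $a_c=\log Z_c$, which does not depend on $i$. Conversely, any $a_c\in\R$ works, because softmax is invariant under adding $a_c\one_n$ to a row. So every such $a_c$ produces a minimizer.

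The only real subtlety is reading "any unconstrained minimizer" correctly. The minimum must actually be attained, which the positivity assumption guarantees; without it, zero-probability items would push logits to $-\infty$, and one would restrict to the support as the setup notes. I also need to be sure that summing the per-context minima gives a global minimizer and that every global minimizer minimizes each context term. Both follow from the weights $\pi(c)$ being positive and the terms being decoupled.
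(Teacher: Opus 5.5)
Your proposal is correct and follows the same route as the paper: decompose the per-context cross-entropy into entropy plus KL divergence, conclude that a minimizer must satisfy $q(\cdot\mid c)=\pi(\cdot\mid c)$, and invert softmax, with the row-shift freedom supplying $a_c$. Your additional bookkeeping makes explicit what the paper's short proof leaves implicit: decoupling across contexts with positive weights $\pi(c)$, attainment of the minimum under the positivity assumption, and the converse that every choice of $a_c$ gives a minimizer.
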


\begin{proof}
For fixed $c$, cross-entropy decomposes as
\begin{equation}
  -\sum_i\pi(i\mid c)\log q(i\mid c)
  = H(\pi(\cdot\mid c))
  +D_{\mathrm{KL}}\!\left(\pi(\cdot\mid c)\,\|\,q(\cdot\mid c)\right).
\end{equation}
It is minimized exactly when $q(\cdot\mid c)=\pi(\cdot\mid c)$. Inverting softmax gives \cref{eq:optimal-logit}, with the additive constant expressing softmax's row-shift invariance.
\end{proof}

Using $\pi(i\mid c)=\pi(c,i)/\pi(c)$,
\begin{equation}
  \log\pi(i\mid c)
  = \PMI(c,i)+\log\pi_i.
\end{equation}
Therefore
\begin{equation}
  z^*_{ci}
  = \PMI(c,i)+\log\pi_i+a_c.
  \label{eq:pmi-popularity}
\end{equation}
This identity is independent of the encoder that produced $h_c$.

\subsection{Gauge-invariant rank-one structure}

Let $M_{ci}=\PMI(c,i)$, $f_i=\log\pi_i$, and $a=(a_1,\ldots,a_m)^\top$. In matrix form,
\begin{equation}
  Z^*=M+\one_m f^\top+a\one_n^\top.
\end{equation}
Multiplying by $P_n$ removes the gauge term:
\begin{equation}
  \widetilde Z^*
  = \widetilde M+\one_m\widetilde f^\top,
  \qquad
  \widetilde M=MP_n,
  \quad
  \widetilde f=P_nf.
  \label{eq:centered-score}
\end{equation}
The popularity variation $F=\one_m\widetilde f^\top$ has rank one and sole nonzero singular value
\begin{equation}
  \sigma_1(F)=\sqrt{m}\,\|\widetilde f\|_2.
\end{equation}

\begin{proposition}[Popularity-induced spectral gap]
\label{prop:spectral-gap}
For the row-centered optimum in \cref{eq:centered-score},
\begin{align}
  \sigma_1(\widetilde Z^*)
  &\geq \sqrt{m}\,\|\widetilde f\|_2-\|\widetilde M\|_2,
  \label{eq:lower-top}\\
  \sigma_2(\widetilde Z^*)
  &\leq \|\widetilde M\|_2.
  \label{eq:upper-second}
\end{align}
Consequently, if $\sqrt{m}\|\widetilde f\|_2>2\|\widetilde M\|_2$, the leading singular value is strictly separated from the second.
\end{proposition}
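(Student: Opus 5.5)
The plan is to treat $\widetilde Z^*=F+\widetilde M$ as a perturbation of the rank-one matrix $F=\one_m\widetilde f^\top$ by $\widetilde M$, and apply Weyl's inequality for singular values: for any matrices $A,B$ of the same size and any index $k$, $|\sigma_k(A+B)-\sigma_k(A)|\le\|B\|_2$. Both bounds follow from this once the singular values of $F$ are known. These were computed just before the statement: $\sigma_1(F)=\sqrt{m}\,\|\widetilde f\|_2$, and $\sigma_k(F)=0$ for $k\ge2$ because $F$ has rank one. The norm $\sigma_1(F)$ follows from $F^\top F=m\,\widetilde f\widetilde f^\top$.

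For \cref{eq:lower-top}, I will apply Weyl with $k=1$, $A=F$, $B=\widetilde M$. This gives $\sigma_1(\widetilde Z^*)\ge\sigma_1(F)-\|\widetilde M\|_2$. Alternatively, it follows directly from the reverse triangle inequality for the operator norm, $\|F+\widetilde M\|_2\ge\|F\|_2-\|\widetilde M\|_2$.

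For \cref{eq:upper-second}, I will use the $k=2$ case: $\sigma_2(F+\widetilde M)\le\sigma_2(F)+\sigma_1(\widetilde M)=\|\widetilde M\|_2$. To make this self-contained I would cite the Courant--Fischer characterization $\sigma_2(X)=\min_{\operatorname{rank}(R)\le1}\|X-R\|_2$ and take $R=F$, which gives $\sigma_2(\widetilde Z^*)\le\|\widetilde Z^*-F\|_2=\|\widetilde M\|_2$. This rank-one approximation argument is the only step needing a nontrivial citation (Eckart--Young--Mirsky), so I expect it to be the main point to get right. It is nonetheless standard.

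Finally, if $\sqrt m\|\widetilde f\|_2>2\|\widetilde M\|_2$, then
\begin{equation*}
\sigma_1(\widetilde Z^*)\ge\sqrt m\|\widetilde f\|_2-\|\widetilde M\|_2>\|\widetilde M\|_2\ge\sigma_2(\widetilde Z^*),
\end{equation*}
which gives the strict separation. I would also note that if $m=1$ or $n=1$, then $\sigma_2$ is zero or undefined, and the claim holds trivially.
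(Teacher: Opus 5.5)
Your proposal is correct and follows the paper's own proof: both write $\widetilde Z^*=F+\widetilde M$ with $F=\one_m\widetilde f^\top$ of rank one, apply Weyl's singular-value inequalities for $k=1$ and $k=2$, and chain the two bounds under the stated hypothesis. Your Eckart--Young alternative for the $\sigma_2$ bound, taking $R=F$, is simply a self-contained justification of that same $k=2$ Weyl step.
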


\begin{proof}
The popularity matrix $F=\one_m\widetilde f^\top$ is rank one. Weyl's singular-value inequalities give
\begin{equation}
  \sigma_1(F+\widetilde M)
  \geq \sigma_1(F)-\sigma_1(\widetilde M)
\end{equation}
and
\begin{equation}
  \sigma_2(F+\widetilde M)
  \leq \sigma_2(F)+\sigma_1(\widetilde M)
  =\|\widetilde M\|_2.
\end{equation}
Substitution yields the result.
\end{proof}

The proposition states a sufficient condition, not a claim that every long-tailed dataset must be degenerate. Strong context-specific PMI structure can compete with the prior. The key point is that the prior contributes a coherent direction repeated across all contexts, while semantic variation is distributed over many directions.

It is useful to summarize the competition by the dimensionless ratio
\begin{equation}
  \rho_{\mathrm{pop}}
  =\frac{\sqrt{m}\,\|\widetilde f\|_2}{\|\widetilde M\|_2}.
  \label{eq:popularity-ratio}
\end{equation}
Then $\rho_{\mathrm{pop}}>2$ is sufficient for the strict separation in \cref{prop:spectral-gap}. If the data-generating logit contains $\alpha\log\pi_i$, the numerator is multiplied by $|\alpha|$.

\begin{corollary}[Zipfian scaling]
\label{cor:zipf}
Suppose item marginals follow $\pi_i=i^{-\zeta}/H_{n,\zeta}$ for rank $i\in\{1,\ldots,n\}$ and exponent $\zeta>0$. Then
\begin{equation}
  \|\widetilde f\|_2
  = \zeta\left[\sum_{i=1}^{n}
  \left(\log i-\frac{1}{n}\sum_{j=1}^{n}\log j\right)^2\right]^{1/2}
  =\Theta(\zeta\sqrt{n}).
\end{equation}
Thus the rank-one singular value scales as $\Theta(\zeta\sqrt{mn})$.
\end{corollary}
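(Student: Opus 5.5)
The plan is to compute $\widetilde f$ explicitly and then reduce the $\Theta(\sqrt n)$ claim to the variance of $\log U$ for $U$ uniform on $(0,1]$.

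\textbf{Explicit form of $\widetilde f$.} Under the Zipf model, $f_i=\log\pi_i=-\zeta\log i-\log H_{n,\zeta}$. The normalizer $\log H_{n,\zeta}$ is constant in $i$, so $P_n$ annihilates it. Centering then gives $\widetilde f_i=-\zeta\bigl(\log i-\frac1n\sum_j\log j\bigr)$, and taking the norm yields the stated identity immediately; the sign is irrelevant.

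\textbf{The $\Theta(\sqrt n)$ bound.} Write $S_n=\sum_{i=1}^n(\log i-\overline{\log})^2$, where $\overline{\log}=\frac1n\sum_j\log j$. I will show $S_n/n\to1$.

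Since $\sum_i(\log i-c)^2$ is invariant under shifting $c$ jointly with the mean, I rescale: $\log i-\overline{\log}=\log(i/n)-\frac1n\sum_j\log(j/n)$. Hence $S_n/n$ is the empirical variance of the values $\log(i/n)$, $i=1,\dots,n$. These are Riemann sums for $\int_0^1\log x\,dx=-1$ and $\int_0^1\log^2x\,dx=2$. The limiting variance is therefore $2-1=1$.

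The main obstacle is justifying convergence despite the integrable singularity of $\log x$ at $0$. I will handle it by monotonicity. Both $\log x$ and $\log^2x$ are monotone on $(0,1]$, so each right-endpoint sum is sandwiched between $\int_{1/n}^1$ and $\int_0^1$ of the integrand, up to one boundary term of order $\frac1n\log^2 n\to0$. Alternatively, Stirling's formula gives $\frac1n\log n!=\log n-1+O(\log n/n)$ directly for the first moment, with a similar estimate for the second.

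Convergence $S_n/n\to1$ gives constants $0<c_1\le S_n/n\le c_2$ for all $n\ge2$. For $n=1$ the norm is $0$, so I state the result for $n\ge2$. This yields $\|\widetilde f\|_2=\Theta(\zeta\sqrt n)$, with constant asymptotically equal to $1$.

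\textbf{Singular value.} Finally, substitute into $\sigma_1(F)=\sqrt m\,\|\widetilde f\|_2$ from the display preceding \cref{prop:spectral-gap}. This gives $\Theta(\zeta\sqrt{mn})$.
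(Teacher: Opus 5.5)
Your proposal is correct and follows the paper's argument: $P_n$ removes the normalizer $\log H_{n,\zeta}$, and the $\Theta(\sqrt n)$ rate comes from the empirical variance of $\log i$ over uniformly distributed ranks tending to a constant. You go further than the paper's one-line justification by identifying that constant as $\operatorname{Var}(\log U)=2-1=1$ through monotone Riemann-sum bounds at the singularity, and by noting that the $\Theta$ statement needs $n\ge 2$, since $\widetilde f=0$ when $n=1$.
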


The derivation follows immediately because the normalizing constant in $\log\pi_i=-\zeta\log i-\log H_{n,\zeta}$ disappears under centering. The asymptotic variance of $\log i$ under uniformly distributed ranks is constant.

\subsection{Time-varying marginals form a temporal subspace}

The static rank-one result extends naturally when item popularity changes over time, as in dynamic recommendation \citep{lin2021dpa}. Let each context $c$ belong to one of $T$ periods, denoted $t(c)$, and let $\pi_{t,i}$ be the item marginal in period $t$. The semantic matrix now uses time-local mutual information, $M_{ci}=\PMI_{t(c)}(c,i)$. Define
\begin{equation}
  f_t=(\log\pi_{t,1},\ldots,\log\pi_{t,n})^\top,
  \qquad
  \widetilde f_t=P_nf_t,
\end{equation}
and collect the centered log marginals in
\begin{equation}
  L=[\widetilde f_1,\ldots,\widetilde f_T]\in\R^{n\times T}.
\end{equation}
Let $G\in\{0,1\}^{m\times T}$ be the period-membership matrix with $G_{ct}=\mathbb{I}[t(c)=t]$.

\begin{proposition}[Dynamic popularity subspace]
\label{prop:dynamic-popularity}
Suppose the population conditional for context $c$ uses the contemporaneous marginal $\pi_{t(c),i}$. After item centering, the optimal score matrix satisfies
\begin{equation}
  \widetilde Z^*
  =\widetilde M+GL^\top,
  \label{eq:dynamic-centered-score}
\end{equation}
where $\widetilde M$ contains centered time-local PMI terms. The dynamic popularity component obeys
\begin{equation}
  \operatorname{rank}(GL^\top)
  \leq \operatorname{rank}(L)
  \leq \min(T,n-1).
  \label{eq:dynamic-popularity-rank}
\end{equation}
If all periods share the same marginal, this component reduces to the static rank-one matrix in \cref{eq:centered-score}.
\end{proposition}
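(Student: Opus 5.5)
The plan mirrors the static argument of \cref{prop:softmax-optimum} and \cref{eq:centered-score}, applied row by row. First, \cref{prop:softmax-optimum} holds separately for each context $c$, since the population cross-entropy decouples across contexts. It gives $z^*_{ci}=\log\pi(i\mid c)+a_c$. Under the hypothesis that the conditional uses the contemporaneous marginal, I will write
\[
\log\pi(i\mid c)=\PMI_{t(c)}(c,i)+\log\pi_{t(c),i},
\]
where $\PMI_{t(c)}(c,i)=\log\bigl(\pi(i\mid c)/\pi_{t(c),i}\bigr)$. This is simply the definition of time-local PMI. In matrix form the $c$-th row of $Z^*$ is then $M_{c\cdot}+f_{t(c)}^\top+a_c\one_n^\top$. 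The middle term is exactly the $c$-th row of $GF^\top$ with $F=[f_1,\ldots,f_T]$, because row $c$ of $G$ is the indicator vector of $t(c)$. Hence $Z^*=M+GF^\top+a\one_n^\top$.

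Next I right-multiply by $P_n$. The gauge term vanishes because $\one_n^\top P_n=0$, and $GF^\top P_n=G(P_nF)^\top=GL^\top$ because $P_n$ is symmetric. This yields \cref{eq:dynamic-centered-score} with $\widetilde M=MP_n$.

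For the rank bound \cref{eq:dynamic-popularity-rank}, I will use $\operatorname{rank}(GL^\top)\le\operatorname{rank}(L^\top)=\operatorname{rank}(L)$. Since $L$ has $T$ columns, $\operatorname{rank}(L)\le T$. Each column of $L$ lies in the range of $P_n$, the $(n-1)$-dimensional subspace orthogonal to $\one_n$, so $\operatorname{rank}(L)\le n-1$. If every period has a nonempty set of contexts, $G$ has full column rank and the first inequality is actually an equality; this is worth noting but not required.

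For the reduction, suppose $\pi_{t,i}=\pi_i$ for all $t$. Then $L=\widetilde f\one_T^\top$, and $G\one_T=\one_m$ because each context belongs to exactly one period. So $GL^\top=G\one_T\widetilde f^\top=\one_m\widetilde f^\top$, and time-local PMI coincides with ordinary PMI. The only step needing care is the bookkeeping in the first paragraph: PMI must be defined relative to the period-$t(c)$ marginal, not the pooled marginal. Otherwise the decomposition picks up a cross term, and $\widetilde M$ would not be the centered time-local PMI matrix as the statement asserts.
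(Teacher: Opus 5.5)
Your proof is correct and follows essentially the same route as the paper: apply \cref{prop:softmax-optimum} row by row with the time-local PMI decomposition, center by $P_n$, bound the rank via the matrix-product rule and the fact that $\operatorname{range}(L)\subseteq\operatorname{range}(P_n)$, and use $G\one_T=\one_m$ for the static reduction. You add some bookkeeping the paper leaves implicit, namely the uncentered form $Z^*=M+GF^\top+a\one_n^\top$ and the observation that the first inequality is an equality when every period is nonempty.
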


\begin{proof}
For every context $c$, the corresponding row of $GL^\top$ is $\widetilde f_{t(c)}^\top$, so the time-local PMI decomposition gives \cref{eq:dynamic-centered-score}. The rank inequality follows from the rank of a matrix product. Since every column of $L$ lies in the $(n-1)$-dimensional centered subspace, $\operatorname{rank}(L)\leq\min(T,n-1)$. When all columns of $L$ are equal, $GL^\top=\one_m\widetilde f^\top$.
\end{proof}

Thus temporal popularity is rank one within each period and occupies the span of the centered log-marginal trajectories globally. The exact bound can be as large as $T$; a smaller effective rank requires additional empirical structure, such as correlated or smoothly evolving marginals. DPA can be interpreted as estimating this time-local prior through its NCE noise distribution and restoring it at retrieval, while the present result identifies the score subspace that such separation prevents the semantic factors from absorbing.

\subsection{From score spectrum to embedding spectrum}

The factorization $HE^\top=Z$ is non-identifiable: for any invertible $A$, $(HA)(EA^{-\top})^\top=HE^\top$. We therefore state the transfer result under the balanced minimum-norm factorization associated with symmetric $\ell_2$ regularization.

\begin{proposition}[Balanced factorization transfer]
\label{prop:balanced-factorization}
Let $\widetilde Z=U\Sigma V^\top$ have rank $r$. Among exact rank-$r$ factorizations, consider
\begin{equation}
  \min_{H,E:\,HE^\top=\widetilde Z}
  \frac{1}{2}\left(\|H\|_F^2+\|E\|_F^2\right).
  \label{eq:balanced-objective}
\end{equation}
Every minimizer is, up to a common orthogonal transform,
\begin{equation}
  H=U\Sigma^{1/2}R,
  \qquad
  E=V\Sigma^{1/2}R.
\end{equation}
Hence the singular values of $E$ are $\sqrt{\sigma_k(\widetilde Z)}$, and
\begin{equation}
  R_1(E)
  =\frac{\sigma_1(\widetilde Z)}{\sum_{k=1}^{r}\sigma_k(\widetilde Z)}.
  \label{eq:embedding-energy}
\end{equation}
\end{proposition}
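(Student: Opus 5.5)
The plan is to lower-bound the objective by the nuclear norm and then check when the bound is attained. Write $Z=\widetilde Z$ for brevity, with compact SVD $U\Sigma V^\top$, where $U\in\R^{m\times r}$, $V\in\R^{n\times r}$ and $\Sigma=\operatorname{diag}(\sigma_1,\ldots,\sigma_r)$.

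\textbf{Lower bound.} For any $H,E$ with $HE^\top=Z$, the trace dual of the nuclear norm together with Cauchy--Schwarz gives
\[
  \|Z\|_*=\operatorname{tr}(U^\top HE^\top V)\le\|U^\top H\|_F\,\|V^\top E\|_F\le\|H\|_F\,\|E\|_F .
\]
The AM--GM inequality then yields
\[
  \tfrac12\bigl(\|H\|_F^2+\|E\|_F^2\bigr)\ge\|Z\|_*=\sum_k\sigma_k .
\]

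\textbf{Attainment.} The candidate $H_0=U\Sigma^{1/2}$, $E_0=V\Sigma^{1/2}$ satisfies $H_0E_0^\top=Z$. Each factor has squared Frobenius norm $\operatorname{tr}\Sigma$, so the bound is achieved. Multiplying both factors by a common orthogonal $R$ preserves the product and both norms, so every $(H_0R,E_0R)$ is also a minimizer.

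\textbf{Characterization of all minimizers.} This is the main step. Suppose $(H,E)$ is a minimizer, so every inequality above is an equality.
\begin{itemize}
  \item Equality in $\|U^\top H\|_F\le\|H\|_F$ forces the columns of $H$ to lie in $\operatorname{range}(U)$, so $H=UA$. Likewise $E=VB$, with $A,B\in\R^{r\times d}$ (and $d=r$ in the exact rank-$r$ setting).
  \item The product constraint then becomes $AB^\top=\Sigma$.
  \item Equality in Cauchy--Schwarz forces $A=\lambda B$ for some $\lambda>0$. Equality in AM--GM gives $\|A\|_F=\|B\|_F$, hence $\lambda=1$.
  \item Therefore $AA^\top=\Sigma$ and $BB^\top=\Sigma$. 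With $A$ square and invertible, $\Sigma^{-1/2}A$ is orthogonal; call it $R$. Then $A=B=\Sigma^{1/2}R$.
\end{itemize}
If $d>r$, the same argument applies with $R$ having orthonormal rows.

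The step I expect to need the most care is the Cauchy--Schwarz equality case. It has to be stated in the trace inner product $\langle U^\top H,V^\top E\rangle$, which requires identifying $\operatorname{tr}(U^\top HE^\top V)=\langle U^\top H,\,V^\top E\rangle_F$. An alternative route avoids equality cases altogether: compare $H^\top H$ and $E^\top E$ through the stationarity condition $H^\top H=E^\top E$, which holds at minima because rescaling by $(HS,ES^{-\top})$ is a feasible perturbation.

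\textbf{Consequence.} Since $R$ is orthogonal, $E=V\Sigma^{1/2}R$ has singular values $\sqrt{\sigma_k(\widetilde Z)}$. Its squared singular values are therefore $\sigma_k(\widetilde Z)$. Substituting these into the definition of $R_1$ gives \cref{eq:embedding-energy} directly.
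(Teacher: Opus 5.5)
Your proposal is correct and follows the same route as the paper's appendix proof: the nuclear-norm bound $\|HE^\top\|_*\le\|H\|_F\|E\|_F\le\frac12(\|H\|_F^2+\|E\|_F^2)$, attainment at $(U\Sigma^{1/2},V\Sigma^{1/2})$, and identification of minimizers from the equality cases. Your explicit equality analysis ($H=UA$, $E=VB$, $A=B$, $AA^\top=\Sigma$, $R=\Sigma^{-1/2}A$) supplies the step the paper only asserts as ``equality requires balanced factors''; the one detail to add is that equality of the products $\|U^\top H\|_F\|V^\top E\|_F=\|H\|_F\|E\|_F$ forces each factor to be equal only because $\|U^\top H\|_F\|V^\top E\|_F\ge\|\widetilde Z\|_*>0$.
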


A proof is included in \cref{app:balanced-proof}. This proposition explains why a strong rank-one target can appear as an item-embedding principal component under symmetric weight decay or implicit balancing. It does not imply that every algebraically valid factorization has the same cosine geometry.

\paragraph{Scope of the transfer result.}
The proposition characterizes exact minimum-norm factors, not every trajectory of a finite-time optimizer. Adaptive preconditioning, asymmetric regularization, encoder constraints, and optimization error can produce unbalanced factors even when the score matrix is concentrated. Related implicit-regularization results establish balancing or low-complexity bias only under specific matrix-factorization dynamics \citep{gunasekar2017implicit,arora2019implicit}; they do not directly prove that Adam reaches \cref{eq:balanced-objective} in our softmax model. We therefore treat balancedness as a measurable transfer condition and evaluate it after removing item gauge and inactive factor dimensions in \cref{sec:experiments}.

\begin{remark}[What the theory attributes to the encoder]
The encoder determines which semantic matrices $M$ are realizable and how training reaches a factorization. Transformer, GRU, CNN, and pooling encoders may therefore differ quantitatively, and their dynamics may amplify or suppress the observed concentration. However, none is needed for the popularity matrix $\one_m\widetilde f^\top$ to appear. The existence of this statistical channel is decoder- and data-dependent; its empirical magnitude is not encoder-independent.
\end{remark}

\section{Negative Gradients Are an Optimization View}
\label{sec:optimization}

For one observed pair $(c,y)$, let
\begin{equation}
  \ell(c,y)=-\log q_\theta(y\mid c).
\end{equation}
Differentiating with respect to item embedding $e_k$ gives
\begin{equation}
  \nabla_{e_k}\ell(c,y)
  =\left(q_\theta(k\mid c)-\mathbb{I}[k=y]\right)h_c.
  \label{eq:item-gradient}
\end{equation}
The corresponding SGD update is
\begin{equation}
  \Delta e_k
  =\eta\left(\mathbb{I}[k=y]-q_\theta(k\mid c)\right)h_c.
  \label{eq:item-update}
\end{equation}
When $k=y$, the item moves toward $h_c$ by $\eta(1-q_k)h_c$. When $k\neq y$, it moves away by $\eta q_kh_c$. The latter is the non-target or negative gradient. In a full softmax, every non-target item contributes; with sampled softmax, only sampled items do.

Taking expectation over $y\sim\pi(\cdot\mid c)$ yields
\begin{equation}
  \E_{y\mid c}[\Delta e_k]
  =\eta\left(\pi(k\mid c)-q_\theta(k\mid c)\right)h_c.
  \label{eq:expected-update}
\end{equation}
At the population optimum $q_\theta=\pi$, this expectation is zero. Negative updates therefore do not by themselves establish an additional asymptotic force beyond the likelihood objective. They are one side of the same balance that produced \cref{eq:pmi-popularity}; finite-time, sampled, or architecture-specific dynamics can nevertheless remain important away from calibration.

Why, then, do rare items show shared-direction trajectories? First, an item with marginal probability $\pi_k$ appears positively only about $N\pi_k$ times in $N$ observations. The relative sampling error of that count is approximately
\begin{equation}
  \frac{\sqrt{N\pi_k(1-\pi_k)}}{N\pi_k}
  \approx \frac{1}{\sqrt{N\pi_k}},
\end{equation}
which is large for tail items. Second, before calibration, the accumulated non-target term is
\begin{equation}
  -\eta\sum_t q_\theta(k\mid c_t)h_{c_t}.
\end{equation}
If context representations have a nonzero mean $\mu_h=\E[h_c]$, many rare items receive a common component approximately proportional to $-\mu_h$, while too few positive events arrive to provide item-specific semantic corrections. Sampled negatives can amplify this effect when their proposal distribution overexposes popular contexts or items.

This account supports two conclusions. The negative-gradient explanation is useful for understanding finite-data training and motivates sampling correction. It should not be presented as an independent proof that a Transformer architecture must collapse, nor should population cancellation be read as evidence that finite-time negative-gradient effects are negligible.

\section{Derived Intervention and Complementary Remedies}
\label{sec:mitigation}

The score decomposition provides a full-softmax geometric justification for a principle already used in dynamic NCE recommendation: separate the item prior from semantic compatibility. Corrected sampling, tail supervision, and geometric regularization address related finite-sample or geometric effects, but they are complementary recommendations rather than consequences of the same theorem.

\subsection{Prior-adjusted semantic logits}

Let $s_{ci}=h_c^\top e_i$ denote a semantic score. Define
\begin{equation}
  q_\tau(i\mid c)
  =\frac{\exp\left(s_{ci}+\tau\log\pi_i\right)}
  {\sum_j\exp\left(s_{cj}+\tau\log\pi_j\right)},
  \label{eq:prior-adjusted}
\end{equation}
where $\tau\in[0,1]$ controls prior separation.

\begin{proposition}[Optimal prior-adjusted score]
\label{prop:prior-adjusted}
The unconstrained population optimum of \cref{eq:prior-adjusted} satisfies
\begin{equation}
  s^*_{ci}
  =\PMI(c,i)+(1-\tau)\log\pi_i+a_c.
  \label{eq:prior-adjusted-optimum}
\end{equation}
In particular, $\tau=1$ removes the item marginal from the semantic score.
\end{proposition}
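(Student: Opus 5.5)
The plan is to reduce \cref{prop:prior-adjusted} to \cref{prop:softmax-optimum} by a change of variables on the logits. Define the total logit $z_{ci}=s_{ci}+\tau\log\pi_i$. Since $\tau$ and $\pi$ are fixed and not learned, the map $s\mapsto z$ is a bijection on $\R^{m\times n}$: it is a translation by the fixed matrix $\tau\one_m f^\top$. Under this map the prior-adjusted model $q_\tau(\cdot\mid c)$ in \cref{eq:prior-adjusted} is exactly the plain softmax of $z_{c\cdot}$. Hence the population cross-entropy as a function of unconstrained $s$ equals the population cross-entropy of \cref{prop:softmax-optimum} evaluated at $z$. It follows that $s^*$ is an unconstrained minimizer if and only if $z^*=s^*+\tau\one_m f^\top$ is an unconstrained minimizer of the original objective.

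Next, invoke \cref{prop:softmax-optimum}. It gives $z^*_{ci}=\log\pi(i\mid c)+a_c$ with $a_c$ arbitrary and independent of $i$. Solving for $s^*$ gives $s^*_{ci}=\log\pi(i\mid c)-\tau\log\pi_i+a_c$. Substituting the identity $\log\pi(i\mid c)=\PMI(c,i)+\log\pi_i$ from \cref{eq:pmi-popularity} yields \cref{eq:prior-adjusted-optimum}. Setting $\tau=1$ cancels the marginal term and leaves $s^*_{ci}=\PMI(c,i)+a_c$.

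There is no substantive obstacle; the only care needed is bookkeeping. First, the change of variables must be bijective, so that the unconstrained optima correspond exactly. This holds because the shift does not depend on $s$. Second, the gauge must be stated correctly: the freedom $a_c$ is unchanged because a shift by $\tau\log\pi_i$ depends only on $i$, not on $c$, so it introduces no additional row-shift ambiguity. I would also remark that after item centering the popularity component of $s^*$ is $(1-\tau)\one_m\widetilde f^\top$. This connects the result to \cref{prop:spectral-gap}, where $\rho_{\mathrm{pop}}$ is scaled by $|1-\tau|$.
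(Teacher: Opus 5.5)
Your proof is correct and follows the paper's argument: the paper likewise applies \cref{prop:softmax-optimum} to the complete adjusted logit, subtracts $\tau\log\pi_i$, and substitutes the PMI decomposition. Your explicit check that $s\mapsto s+\tau\one_m f^\top$ is a translation, so unconstrained minimizers correspond exactly and the gauge $a_c$ is unchanged, is bookkeeping the paper leaves implicit.
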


\begin{proof}
By \cref{prop:softmax-optimum}, the complete adjusted logit must equal $\log\pi(i\mid c)+a_c$. Subtracting $\tau\log\pi_i$ and applying the PMI decomposition yields \cref{eq:prior-adjusted-optimum}.
\end{proof}

An equivalent implementation uses
\begin{equation}
  z_{ci}=h_c^\top e_i+b_i,
  \qquad b_i\approx\tau\log\pi_i.
\end{equation}
The bias should be fixed or regularized toward the prior. If $b_i$ and the embedding factors are both unconstrained, the split is not identifiable and popularity can leak back into $e_i$.

Training-time and post-hoc signs are easy to confuse. During prior-adjusted training, $+\tau\log\pi_i$ is supplied outside the semantic score, so $s_{ci}$ need not relearn it. For a conventionally trained model whose logits already contain the prior, a debiased retrieval score instead uses $z_{ci}-\tau\log\pi_i$.

\subsection{Connection to dynamic NCE separation}

Prior-adjusted full softmax and DPA's dynamic NCE procedure implement the same separation principle through different estimators. For an NCE objective with $K$ negatives drawn from $q_t^{\mathrm{noise}}(i)$, the unconstrained optimal score has the form
\begin{equation}
  s^*_{\mathrm{NCE}}(c,i)
  =\log\pi_t(i\mid c)-\log q_t^{\mathrm{noise}}(i)-\log K.
  \label{eq:nce-optimum}
\end{equation}
When $q_t^{\mathrm{noise}}(i)=\pi_{t,i}$, this becomes time-local $\PMI_t(c,i)-\log K$. At the exact optimum, the conditional log score is recovered by
\begin{equation}
  \log\pi_t(i\mid c)
  =s^*_{\mathrm{NCE}}(c,i)
  +\log q_t^{\mathrm{noise}}(i)+\log K.
  \label{eq:nce-recovery}
\end{equation}
DPA used this identity operationally with time-local noise and a recent popularity estimate at retrieval \citep{lin2021dpa}. The $\tau=1$ case of \cref{prop:prior-adjusted} is its full-softmax analogue at the population-score level: both reserve the dot product for a PMI-like residual and place popularity in an explicit channel. They are not identical training estimators, and dynamic prior misspecification affects them differently.

\subsection{Negative sampling: correction versus target shaping}

When full softmax is impractical, let negatives be sampled from proposal $q(i)$. A useful mixture is
\begin{equation}
  q(i)=\rho\frac{1}{n}
  +(1-\rho)\frac{\pi_i^\gamma}{\sum_j\pi_j^\gamma},
  \qquad 0<\gamma<1,
  \label{eq:mixture-sampling}
\end{equation}
which preserves hard popular negatives while increasing tail coverage. A sampled-softmax logit should include the standard proposal correction
\begin{equation}
  \widetilde z_{ci}=z_{ci}-\log(Kq(i))
\end{equation}
for $K$ samples, or use another estimator with an explicitly characterized target. Changing $q$ without correction changes the learned distribution rather than merely reducing variance.

Two uses of $q$ should therefore not be conflated. If sampled softmax is intended to approximate a fixed full-softmax likelihood, proposal correction is required. If NCE deliberately defines the target relative to $q_t^{\mathrm{noise}}$, as in \cref{eq:nce-optimum}, the changed target is intentional and must be paired with an explicit recovery rule such as \cref{eq:nce-recovery}. The error is not target shaping itself, but changing the target without stating or recovering it.

Tail-positive supervision can be strengthened with clipped weights
\begin{equation}
  w_i=\operatorname{clip}
  \left((\pi_i+\epsilon)^{-\alpha},w_{\min},w_{\max}\right),
\end{equation}
or with content-based and sequence-level augmentation. Clipping is important because unbounded inverse-frequency weights can make the rarest observations dominate optimization.

\subsection{Geometric regularization}

After prior separation, uniformity can reduce residual cone concentration. With normalized item vectors $\widehat e_i=e_i/\|e_i\|$, one option is \citep{wang2020uniformity}
\begin{equation}
  \mathcal{L}_{\mathrm{unif}}
  =\log\E_{i\neq j}
  \exp\left(-t\|\widehat e_i-\widehat e_j\|_2^2\right).
\end{equation}
Contrastive objectives can impose a similar pressure on sequence representations and indirectly reshape item embeddings \citep{qiu2022duorec}. Covariance or spectral penalties offer alternatives. Normalization alone only removes norm variation; it does not prevent all vectors from sharing the same angular direction.

This distinction motivates the practical engineering order
\begin{equation}
  \boxed{\text{prior separation}}
  \;\rightarrow\;
  \boxed{\text{target-aware sampling}}
  \;\rightarrow\;
  \boxed{\text{geometric regularization}}.
\end{equation}
The first term addresses the population target and is derived from \cref{prop:prior-adjusted}. Target-aware sampling means either proposal correction for a fixed likelihood or explicit NCE target shaping with recovery. The final term addresses residual geometry. The sequence is a design recommendation, not a theorem of optimality.

\section{Empirical Studies}
\label{sec:experiments}

We conduct two matched studies. The first isolates the population mechanism and conditional score-to-factor transfer with a synthetic generator. The second tests the predicted score channel on temporally split Alibaba/Tianchi logs and intervenes on whether the dot product must internalize the item marginal.

\subsection{Controlled synthetic study}

\paragraph{Data generation.}

We sample $m=64$ context factors and $n=256$ item factors from isotropic Gaussians. Their dot product produces a centered semantic score matrix $M$ of rank at most eight. Item marginals follow a Zipf law with exponent $1.15$. For popularity strength $\alpha$, the data distribution is
\begin{equation}
  \pi_\alpha(i\mid c)
  \propto
  \exp\left(M_{ci}+\alpha\log\pi_i\right).
  \label{eq:synthetic-data}
\end{equation}
We sweep $\alpha\in\{0,0.5,1,1.5,2\}$ and repeat each experiment for five random seeds.

We optimize exact expected cross-entropy rather than drawing observations. This removes mini-batch and negative-sampling noise, isolating the population-target effect. Both models use ten-dimensional context and item factors, Adam for $2{,}200$ steps, learning rate $0.035$, and symmetric weight decay $2\times10^{-5}$.

The \emph{standard} model predicts with
\begin{equation}
  q_{\mathrm{std}}(i\mid c)
  =\softmax_i\left(H E^\top/\sqrt{d}\right).
\end{equation}
The \emph{prior-adjusted} model predicts with
\begin{equation}
  q_{\mathrm{PA}}(i\mid c)
  =\softmax_i\left(H E^\top/\sqrt{d}+\alpha\log\pi_i\right).
\end{equation}
Thus both models fit the same conditional distribution, but only the standard model must represent the prior inside $E$.

\paragraph{Metrics.}

We report the leading item-embedding energy $R_1(E)$, effective rank, and the absolute correlation between the first item principal-component projection and $\log\pi_i$. We also verify that both models fit the target distribution using $D_{\mathrm{KL}}(\pi_\alpha\|q)$.

To test the assumption behind \cref{prop:balanced-factorization}, let $\bar E=P_nE$ and let $Q$ contain an orthonormal basis for the active right-singular subspace of $\bar E$. We compare the gauge-correct factors $H_c=HQ$ and $E_c=\bar E Q$ using
\begin{equation}
  B_c=
  \frac{\|H_c^\top H_c-E_c^\top E_c\|_F}
  {\|H_c^\top H_c\|_F+\|E_c^\top E_c\|_F}.
  \label{eq:centered-balance}
\end{equation}
The projection excludes factor dimensions that cannot affect the centered score. Equal Frobenius norms alone would not establish balancedness. We also measure normalized spectrum-transfer error
\begin{equation}
  D_{\mathrm{spec}}
  =\frac{1}{2}\sum_k\left|
  \frac{s_k(E_c)^2}{\sum_j s_j(E_c)^2}
  -\frac{\sigma_k(ZP_n)}{\sum_j\sigma_j(ZP_n)}
  \right|,
  \label{eq:spectrum-transfer-error}
\end{equation}
which is zero for the balanced exact factorization. Finally, $\rho_{\mathrm{pop}}$ from \cref{eq:popularity-ratio}, with numerator scaled by $\alpha$, records when the sufficient spectral regime is reached.

\begin{figure}[t]
  \centering
  \includegraphics[width=\textwidth]{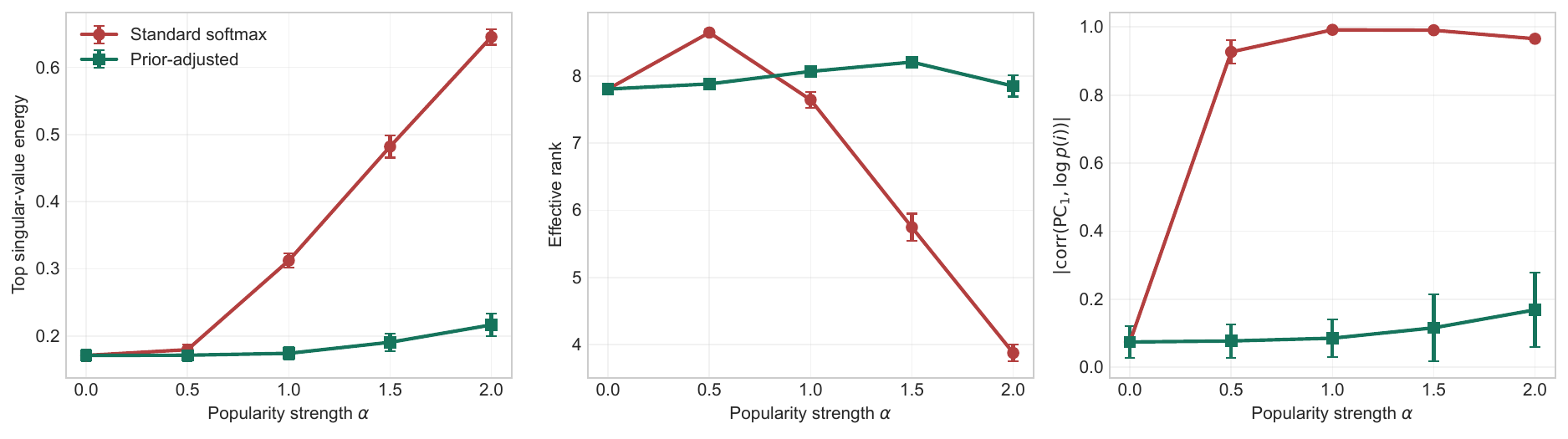}
  \caption{In the matched synthetic setting, popularity strength causes spectral concentration when the item prior must be represented inside the dot product. Error bars show one standard deviation over five seeds.}
  \label{fig:synthetic-results}
\end{figure}

\begin{table}[t]
  \centering
  \small
  \caption{Synthetic results, mean $\pm$ standard deviation over five seeds. PA denotes prior-adjusted softmax. KL values are means; their across-seed standard deviations are below $3\times10^{-4}$.}
  \label{tab:synthetic-results}
  \begin{tabular}{llrrrr}
    \toprule
    $\alpha$ & Method & $R_1(E)$ & $\erank(E)$ & $|\mathrm{corr}(\mathrm{PC}_1,\log\pi)|$ & KL \\
    \midrule
    0 & Standard & $0.171\pm0.008$ & $7.81\pm0.03$ & $0.074\pm0.047$ & $0.0004$ \\
    0 & PA       & $0.171\pm0.008$ & $7.81\pm0.03$ & $0.074\pm0.047$ & $0.0004$ \\
    1 & Standard & $0.312\pm0.010$ & $7.65\pm0.12$ & $0.992\pm0.002$ & $0.0018$ \\
    1 & PA       & $0.174\pm0.009$ & $8.07\pm0.03$ & $0.086\pm0.054$ & $0.0013$ \\
    2 & Standard & $0.645\pm0.012$ & $3.87\pm0.12$ & $0.965\pm0.002$ & $0.0037$ \\
    2 & PA       & $0.216\pm0.017$ & $7.85\pm0.16$ & $0.169\pm0.109$ & $0.0028$ \\
    \bottomrule
  \end{tabular}
\end{table}

\begin{table}[t]
  \centering
  \small
  \caption{Gauge-correct transfer diagnostics, mean $\pm$ standard deviation over five seeds. $\rho_{\mathrm{pop}}$ is averaged across generated problems and is shared by the two fitting methods. Lower $B_c$ and $D_{\mathrm{spec}}$ indicate closer agreement with balanced transfer.}
  \label{tab:factorization-diagnostics}
  \begin{tabular}{llrrr}
    \toprule
    $\alpha$ & Method & $\rho_{\mathrm{pop}}$ & $B_c$ & $D_{\mathrm{spec}}$ \\
    \midrule
    0 & Standard/PA & $0.00$ & $(2.20\pm0.59)\times10^{-5}$ & $(3.58\pm1.41)\times10^{-6}$ \\
    1 & Standard & $2.39$ & $0.00207\pm0.00051$ & $0.000231\pm0.000060$ \\
    1 & PA       & $2.39$ & $0.00017\pm0.00016$ & $0.000015\pm0.000010$ \\
    2 & Standard & $4.78$ & $0.0806\pm0.0045$ & $0.0136\pm0.0012$ \\
    2 & PA       & $4.78$ & $0.00031\pm0.00037$ & $0.000028\pm0.000021$ \\
    \bottomrule
  \end{tabular}
\end{table}

\paragraph{Results.}

At $\alpha=0$, the two objectives coincide and produce identical geometry. As popularity strengthens, standard softmax develops the predicted dominant direction. At $\alpha=1$, $\rho_{\mathrm{pop}}=2.39$ exceeds the sufficient threshold in \cref{prop:spectral-gap}, and the first principal-component projection correlates with $\log\pi_i$ at $0.992$. At $\alpha=2$, the leading direction contains $64.5\%$ of item-embedding energy and effective rank falls to $3.87$.

Prior adjustment largely removes this transition. Even at $\alpha=2$, leading energy is $21.6\%$ and effective rank is $7.85$. Both methods achieve small KL divergence, so the geometric difference is not caused by one model failing to fit the data. The prior-adjusted model fits popularity through the supplied offset and allocates substantially less leading-factor energy to it.

The new diagnostics support the conditional transfer argument in this setting. For standard softmax at $\alpha=2$, the centered Gram mismatch is $B_c=0.081$ and the normalized spectrum-transfer error is only $D_{\mathrm{spec}}=0.014$. The factors are therefore not exactly minimum-norm, but their spectra remain close to the balanced prediction. Prior adjustment has still smaller errors because its learned dot product need only factorize the semantic residual. These measurements validate the transfer condition for this experiment; they do not establish that arbitrary encoders and optimizers will be similarly balanced.

The non-monotonic standard effective rank between $\alpha=0$ and $0.5$ is expected: a weak additional direction can initially increase utilized rank before it becomes dominant. The sharp rise in leading energy and principal-component correlation beyond that point is the relevant signature. Because the supplied offset uses the exact prior form in \cref{eq:synthetic-data}, this experiment does not test finite-sample prior estimation, exposure confounding, or temporal misspecification.

\subsection{Alibaba/Tianchi public interaction-log study}
\label{sec:real-data}

\paragraph{Data and protocol.}
We use item-view events from the public Alibaba/Tianchi mobile recommendation competition \citep{tianchi2015mobile}, distributed with DPA \citep{lin2021dpa}. All item filtering statistics and priors are computed from events before the strict split at 2014-12-18 00:00. A deterministic medium-scale subset samples users with at least 50 training views and one held-out view, retains the most frequent eligible items, and iterates the filters to convergence. Post-split activity is used only to retain users with evaluable targets, not to estimate training frequencies; conditioning on evaluable users nevertheless limits population-level interpretation. Duplicate views are preserved, and previously viewed items are not masked at evaluation, so the task is repeat-view next-item retrieval rather than novel-item recommendation. The resulting data contain 459,105 training events over 30 days, 12,897 held-out events, 1,413 users, and 20,000 items. Item counts range from 11 to 290 (median 18); the most frequent 1\% of items account for 4.97\% of training events. This medium subset is less concentrated than the corresponding full strict data, whose top 1\% account for 7.39\%.

We compare two normalized 128-dimensional user--item factor models under identical initialization families, Adam settings, batch size 512, score scale $s=20$, five epochs, and seeds 2021--2023. Standard full softmax uses logits $s h_u^\top e_i$. Static prior-adjusted softmax uses $s h_u^\top e_i+\log\widehat\pi_i$, where $\widehat\pi$ is estimated only from training events with additive smoothing. Exact normalization over all 20,000 items avoids a sampled-negative confound. Every run processes all $2{,}295{,}525$ epoch-events and drops none. The user-ID encoder is deliberately simple: the comparison isolates the decoder objective rather than claiming a cross-architecture prevalence result.

\paragraph{Score-channel diagnostics.}
Let $S_c=sHE^\top P_n$ be the item-centered learned score matrix and
\begin{equation}
  \ell=\frac{P_n\log\widehat\pi}{\|P_n\log\widehat\pi\|_2}
\end{equation}
the normalized popularity direction. Write $a=S_c\ell$ and $\bar a=m^{-1}\one_m^\top a$. We distinguish the complete projection onto the popularity direction,
\begin{equation}
  S_\ell=a\ell^\top,
  \qquad
  \rho_\ell=\frac{\|S_\ell\|_F^2}{\|S_c\|_F^2},
  \label{eq:empirical-pop-direction}
\end{equation}
from the stricter context-shared component predicted by the static theory,
\begin{equation}
  \widehat S_{\mathrm{pop}}=\one_m\bar a\ell^\top,
  \qquad
  \widehat\rho_{\mathrm{pop}}
  =\frac{\|\widehat S_{\mathrm{pop}}\|_F^2}{\|S_c\|_F^2}.
  \label{eq:empirical-rank-one}
\end{equation}
The ratio $\widehat\rho_{\mathrm{pop}}/\rho_\ell$ measures how much popularity-direction energy has a coefficient shared across contexts. This least-squares projection is an operational context-shared popularity-aligned component, not an identified recovery of the theoretical prior matrix: context-mean PMI can also project onto $\ell$. Its response to the matched intervention is therefore evidence about learned-score directionality, while causal attribution requires additional assumptions. We additionally report $|v_1^\top\ell|$ for the leading right singular vector of $S_c$, the leading score-energy ratio $R_1(S_c)$, and the reduction in the leading singular value after subtracting $\widehat S_{\mathrm{pop}}$. These quantities are computed through the low-dimensional factors without materializing the full score matrix.

\begin{table}[t]
  \centering
  \small
  \caption{Score-channel diagnostics on Alibaba/Tianchi logs, mean $\pm$ standard deviation over three paired seeds. PA supplies the training marginal explicitly. The final column is the paired relative change from standard to PA; every listed paired delta has the same sign in all three seeds.}
  \label{tab:real-score-channel}
  \begin{tabular}{lrrr}
    \toprule
    Diagnostic & Standard & PA & Relative change \\
    \midrule
    $R_1(S_c)$ & $0.08820\pm0.00220$ & $0.04806\pm0.00171$ & $-45.5\%$ \\
    $|v_1^\top\ell|$ & $0.52721\pm0.00869$ & $0.08735\pm0.02173$ & $-83.4\%$ \\
    $\rho_\ell$ & $0.02604\pm0.00032$ & $0.001041\pm0.000265$ & $-96.0\%$ \\
    $\widehat\rho_{\mathrm{pop}}$ & $0.01905\pm0.00025$ & $0.000265\pm0.000105$ & $-98.6\%$ \\
    $\widehat\rho_{\mathrm{pop}}/\rho_\ell$ & $0.73150\pm0.01335$ & $0.24715\pm0.04343$ & $-66.2\%$ \\
    Leading-$\sigma$ removal effect & $0.13514\pm0.00376$ & $0.00272\pm0.00107$ & $-98.0\%$ \\
    \bottomrule
  \end{tabular}
\end{table}

\paragraph{Predictive fit and direction specificity.}
We evaluate held-out negative log-likelihood using each model's complete training-objective logits: $s h_u^\top e_i$ for standard softmax and $s h_u^\top e_i+\log\widehat\pi_i$ for static prior adjustment. Standard softmax obtains $8.85524\pm0.01941$ nats per event and prior adjustment obtains $8.90428\pm0.01307$. The paired difference is $0.04903\pm0.00646$ nats, or $0.55\%$ of the standard mean. A paired user-cluster bootstrap, resampling held-out users within each of the three fixed seeds, gives $[0.04313,0.05472]$. No equivalence margin was pre-specified; this is a small detected likelihood degradation, not an equivalence result.

To test whether the score-energy change is specific to the empirical popularity direction, we uniformly permute item labels of the exact centered log-frequency vector 2,000 times per seed. Each permutation preserves its values and norm while breaking correspondence with item identity. The observed standard-to-PA drop in strict shared energy is $0.01878$ on average, compared with a permutation-null mean of $2.12\times10^{-6}$. The true direction exceeds all 2,000 permutations in every seed, giving a one-sided empirical $p=1/2001$ per seed. This establishes item-direction specificity against the stated null; it does not distinguish the additive prior from PMI structure aligned with frequency.

\paragraph{Geometry and retrieval.}
The score intervention is accompanied by a change in item geometry. Standard softmax has $R_1(E)=0.02187\pm0.00077$ and mean pairwise cosine $0.02698\pm0.00395$. The absolute correlation between item PC1 and log popularity is $0.56802\pm0.00854$. Prior adjustment reduces these values to $0.01571\pm0.00019$, $0.00648\pm0.00161$, and $0.07806\pm0.02501$, respectively, while effective rank increases from $125.38\pm0.13$ to $126.12\pm0.06$. Neither model exhibits severe low-rank collapse: the effective ranks are close to the 128-dimensional maximum. Moreover, centered balance error is $0.858\pm0.0003$ for standard softmax and $0.864\pm0.0003$ for prior adjustment. The real-data factors are therefore far from the balanced regime of \cref{prop:balanced-factorization}; their embedding changes are empirical outcomes of this intervention, not validation of the transfer theorem.

For retrieval, standard softmax ranks by its learned dot product. The prior-adjusted model restores a seven-day training-only log marginal with coefficient $\beta=1$. Table~\ref{tab:real-retrieval} reports the pre-specified comparison at $K=50$. Confidence intervals resample held-out users within each seed and average the three fixed-seed estimates; they quantify user-level uncertainty, not generalization over optimizer seeds. The Macro Recall@50 point difference is positive and its interval includes zero, which is not a formal non-inferiority result. NDCG@50 increases, but explicit prior restoration reduces coverage and novelty. The exposed coefficient $\beta$ therefore controls a visible serving trade-off that standard softmax partly hides in its factors.

\begin{table}[t]
  \centering
  \small
  \caption{Alibaba/Tianchi retrieval results at $K=50$, mean $\pm$ standard deviation over three seeds. Bracketed intervals are paired user-bootstrap 95\% intervals for PA minus standard. Bucketed event Recall, coverage, and novelty differences are descriptive.}
  \label{tab:real-retrieval}
  \begin{tabular}{lrrr}
    \toprule
    Metric & Standard & PA + restored prior & Paired difference \\
    \midrule
    Macro Recall@50 & $0.36211\pm0.00405$ & $0.36553\pm0.00119$ & $0.00342\;[-0.00100,0.00803]$ \\
    Macro NDCG@50 & $0.18662\pm0.00189$ & $0.21350\pm0.00047$ & $0.02688\;[0.02320,0.03050]$ \\
    Head event Recall@50 & $0.42968\pm0.00229$ & $0.44529\pm0.00196$ & $0.01561$ \\
    Middle event Recall@50 & $0.34519\pm0.00341$ & $0.34154\pm0.00363$ & $-0.00366$ \\
    Tail event Recall@50 & $0.27035\pm0.00307$ & $0.27568\pm0.00113$ & $0.00533$ \\
    Catalog coverage & $0.98482\pm0.00044$ & $0.72612\pm0.00111$ & $-0.25870$ \\
    Novelty (bits) & $14.1765\pm0.0004$ & $14.0389\pm0.0037$ & $-0.1376$ \\
    \bottomrule
  \end{tabular}
\end{table}

The matched intervention is the central empirical test. Under standard softmax, the operational context-shared component accounts for $1.905\%$ of total centered score energy and $73.2\%$ of the energy projected onto the popularity direction. Although the global fraction is not dominant, subtracting this single component reduces the leading singular value by $13.5\%$. Supplying the same statistical prior outside the dot product reduces $98.6\%$ of the shared component and $96.0\%$ of the broader popularity-direction energy. The permutation null shows that this change is specific to the item identity of the observed log-frequency direction, while the NLL comparison reveals a small predictive cost. Together these results support a controllable popularity-aligned rank-one component in learned scores; they neither identify a unique causal decomposition nor imply that the complete score matrix is rank one.

\section{Discussion and Practical Guidance}
\label{sec:discussion}

\paragraph{What is and is not Transformer-specific.}
The derivation depends on the conditional likelihood, item marginal, and dot-product decoder. It applies when $h_c$ comes from a Transformer, GRU, CNN, graph network, or nonparametric pooling rule. Architectures still matter through expressivity, normalization, residual connections, and optimization. They can amplify, suppress, or add to the observable symptom, but replacing a Transformer with another encoder does not by itself remove the rank-one target channel.

\paragraph{Score anisotropy versus embedding anisotropy.}
The row-centered optimal score matrix has a gauge-invariant popularity component. A particular embedding matrix is not uniquely determined because $HE^\top=(HA)(EA^{-\top})^\top$. Our transfer result therefore requires balanced minimum-norm factors. Likewise, a narrow cone measured by positive mean cosine is not invariant to translation or row-logit gauge. Empirical studies should report predictive score structure, centered-factor balancedness, and embedding geometry rather than treating one visualization as a theorem.

\paragraph{Popularity is not always noise.}
The observed marginal $\pi_i$ contains exposure, availability, repeated behavior, and genuine aggregate preference. It is also partly generated by earlier serving policies. Removing it entirely may hurt calibrated next-item likelihood or business metrics. Prior separation is useful because it makes the trade-off explicit. At retrieval time one can rank with
\begin{equation}
  r_{ci}=h_c^\top e_i+\beta\log\pi_i,
\end{equation}
where $\beta$ is selected for the desired balance among accuracy, novelty, coverage, and tail exposure. In our static study, prior adjustment raises complete-logit held-out NLL by $0.0490$ nats per event ($0.55\%$), even though the primary Recall@50 difference includes zero and NDCG@50 increases under recent-prior restoration. This small detected likelihood cost may reflect finite optimization, model constraints, or temporal mismatch; the experiment does not identify which. The important change is not necessarily setting $\beta=0$, but making the observed-frequency contribution explicit and controllable rather than leaving it entangled in the dot product.

\paragraph{Static geometry and dynamic recommendation.}
DPA previously showed that time-local popularity can be used as an NCE noise distribution during training and restored from recent interactions during retrieval \citep{lin2021dpa}. \Cref{prop:dynamic-popularity} supplies the geometric bridge: each period contributes a rank-one centered direction, and the complete trajectory occupies the span of those directions. Our static intervention supports the corresponding one-marginal score mechanism, but the daily marginals in the real-data subset are not themselves close to rank one across time: the 30-day log-prior matrix has algebraic rank 30 and effective rank 27.43. Dynamic popularity can therefore occupy a broad temporal subspace even though each period contributes a rank-one channel. Comparing dynamic prior adjustment and DPA with score-channel diagnostics remains necessary.

\paragraph{Evaluation protocol.}
Overall Recall or NDCG can hide head--tail trade-offs. We report event Recall separately for head, middle, and tail items, together with catalog coverage and novelty. In the primary comparison, head and tail Recall@50 increase by $0.01561$ and $0.00533$, respectively, while middle Recall decreases by $0.00366$; these subgroup differences are descriptive rather than separately inferential. Geometry should be monitored with $R_1(E)$, effective rank, mean pairwise cosine, correlation between principal components and $\log\pi_i$, and the score-level quantities in \cref{eq:empirical-pop-direction,eq:empirical-rank-one}. Restoring the prior at $\beta=1$ reduces coverage and novelty, and the sweep in \cref{tab:beta-sensitivity} shows that the best coefficient depends on the serving objective. A mitigation is successful only relative to an explicitly chosen relevance--exposure trade-off.

\paragraph{Limitations.}
The synthetic generator contains the exact global $\alpha\log\pi_i$ term supplied to the prior-adjusted model, so its positive result does not establish robustness to prior misspecification. The Alibaba/Tianchi study estimates its priors from training events, but uses a deterministic medium subset rather than the complete log. Its top-1\% event share is 4.97\%, compared with 7.39\% in the full strict data, and three seeds establish repeatability rather than high-powered seed-level inference. Subset membership is conditioned on having a post-split target, and evaluation allows repeated previously viewed items; both choices restrict generalization beyond repeat-view retrieval for active users. The user-ID factor model isolates the dot-product objective; it does not establish how often the component dominates in Transformer, recurrent, graph, or production encoders. The high centered balance error also means that its embedding changes do not test the minimum-norm transfer result. The operational shared projection is not an identified decomposition because mean PMI can align with frequency. The permutation null establishes item-direction specificity, not causal separation of prior, exposure, and preference. View frequency itself mixes exposure, availability, repeated browsing, and preference. Real systems add approximate retrieval, tied embeddings, sampled negatives, nonlinear encoder constraints, and feedback loops. The sufficient spectral condition in \cref{prop:spectral-gap} can be conservative, and balanced factorization is a measured condition in the synthetic study rather than a universal property of optimization. Cross-encoder and full-log studies are needed before making prevalence claims.

\section{Conclusion}

Representation degeneration in sequential recommendation should not be attributed to Transformer encoders alone. Under a dot-product softmax objective, the population-optimal logit contains an item-marginal term $\log\pi_i$. Across contexts, a static marginal contributes a gauge-invariant rank-one component whose strength grows with long-tail severity; time-varying marginals span a rank-at-most-$T$ temporal subspace. When a low-rank model learns an approximately balanced factorization of such scores, dominant item-embedding directions are a natural outcome. Other architectural and optimization mechanisms may coexist or dominate in a particular system.

Non-target gradients explain how rare items can follow shared, noisy trajectories in finite data, but their expectation cancels at the calibrated population optimum. This connects the gradient and population views without dismissing finite-time effects. Our earlier DPA work used dynamic NCE target shaping and prediction recovery to separate popularity operationally; the present analysis explains the score geometry targeted by this separation and extends the principle to a continuous prior-adjusted full-softmax objective. In Alibaba/Tianchi logs, a matched intervention reduces the operational context-shared popularity-aligned energy by 98.6\% and the broader popularity-direction energy by 96.0\%. The reduction exceeds all 2,000 item-label permutations in every seed, but incurs a small detected held-out NLL cost of $0.0490$ nats per event; the Recall@50 interval includes zero. Corrected sampling, stronger tail supervision, and uniformity regularization remain complementary responses. The resulting claim is precise: the rank-one marginal component exists in population-optimal scores independently of the encoder, while its learned magnitude, causal interpretation, and transfer to embedding geometry depend on data, factorization, and optimization.

\section*{Research Declarations}

\paragraph{Data and code availability.}
The synthetic experiment, real-data preprocessing and evaluation pipeline, aggregate measurements, and commands needed to reproduce the reported tables are included in the source archive. The raw Alibaba/Tianchi interactions are available from the competition data page at \url{https://tianchi.aliyun.com/competition/entrance/231522/information} and are not redistributed here. The historical DPA implementation is available at \url{https://github.com/alibaba/Dynamic-popularity-aware-recommendation}.

\paragraph{Ethics declaration.}
This study performs secondary analysis of a released recommendation competition dataset and introduces no new human participants or interventions. The analysis uses only the opaque identifiers provided with the data and makes no attempt to identify users.

\paragraph{Author contributions.}
Yang Cheng: Conceptualization, Methodology, Software, Validation, Formal Analysis, Investigation, Data Curation, Writing--Original Draft, Writing--Review and Editing, and Visualization.

\paragraph{Conflict of interest.}
The author declares no conflict of interest.

\paragraph{Funding.}
This research received no external funding.

\paragraph{AI-tool disclosure.}
Generative AI tools, including Claude and OpenAI Codex, were used for language editing, code critique, and experiment orchestration. The author inspected the derivations, citations, implementation, and reported results and takes responsibility for the manuscript.

\bibliographystyle{abbrvnat}
\bibliography{references}

\appendix
\section{Proof of the Balanced Factorization Result}
\label{app:balanced-proof}

Let $\widetilde Z=U\Sigma V^\top$ be a compact singular-value decomposition. The factorization
\begin{equation}
  H=U\Sigma^{1/2},
  \qquad
  E=V\Sigma^{1/2}
\end{equation}
is feasible and has objective value
\begin{equation}
  \frac{1}{2}(\|H\|_F^2+\|E\|_F^2)
  =\operatorname{tr}(\Sigma)
  =\|\widetilde Z\|_*.
\end{equation}
For any feasible $H,E$, the variational characterization of the nuclear norm gives
\begin{equation}
  \|HE^\top\|_*
  \leq \|H\|_F\|E\|_F
  \leq \frac{1}{2}(\|H\|_F^2+\|E\|_F^2).
\end{equation}
Therefore no factorization can achieve a smaller value. Equality requires balanced factors in the singular subspaces, giving
\begin{equation}
  H=U\Sigma^{1/2}R,
  \qquad
  E=V\Sigma^{1/2}R
\end{equation}
for an orthogonal matrix $R$. Right multiplication by $R$ does not change singular values. Thus $s_k(E)=\sqrt{\sigma_k(\widetilde Z)}$, which yields \cref{eq:embedding-energy}.

\section{Reproducibility Details}
\label{app:reproducibility}

The synthetic study requires Python 3, NumPy, and Matplotlib. From the source directory, run
\begin{verbatim}
python3 experiment.py --output-dir figures --steps 2200 --seeds 5
\end{verbatim}
The script writes per-run measurements to \texttt{figures/synthetic\_runs.csv}, aggregate statistics to \texttt{figures/synthetic\_summary.json}, and the paper figure in PDF and PNG formats. All random generators are explicitly seeded. Exact expected cross-entropy is evaluated over all $64\times256$ context--item pairs at every optimization step.

\section{Real-Data Reproducibility Details}
\label{app:real-reproducibility}

The source archive contains the PyTorch pipeline in \texttt{real\_data\_code/} and the machine-readable aggregate measurements in \texttt{artifacts/real\_data/}. The archived raw CSV used for the reported run has SHA-256
\begin{verbatim}
52b9fdfcf3793dbf827a2ca752f9c6cc04fe9407392e3829b9a175d86bd31c18
\end{verbatim}
After placing the Tianchi CSV at \texttt{data/fresh\_comp\_offline/}, the processed datasets are generated by
\begin{verbatim}
python real_data_code/preprocess.py \
  --raw-csv data/fresh_comp_offline/tianchi_fresh_comp_train_user.csv \
  --output-dir data/processed/strict --protocol strict

python real_data_code/subsample.py \
  --source-dir data/processed/strict \
  --output-dir data/processed/medium \
  --num-users 2000 --max-items 20000 \
  --min-user-train 50 --min-user-test 1 --min-item-train 5 \
  --min-retained-user-fraction 0.7 --seed 2021
\end{verbatim}
The matched matrix and aggregation commands are
\begin{verbatim}
python real_data_code/run_matrix.py \
  --data-dir data/processed/medium --runs-dir real_data_runs \
  --methods full_softmax static_pa --seeds 2021 2022 2023 \
  --epochs 5 --device auto --skip-completed

python real_data_code/aggregate.py \
  --runs-dir real_data_runs --output-dir real_data_results \
  --primary-k 50 --bootstrap-samples 10000

python real_data_code/diagnose_fit_specificity.py \
  --data-dir data/processed/medium --runs-dir real_data_runs \
  --output-dir fit_specificity_results \
  --permutations 2000 --bootstrap-samples 10000
\end{verbatim}
The six unit tests cover temporal splitting, probability normalization, batching, fit-specificity helpers, the uniformity gradient, and the synthetic rank-one channel. Every reported training run records its processed and dropped example counts in \texttt{config.json}; the six runs used here report zero dropped examples.

\clearpage
\section{Serving-Prior Sensitivity}

\begin{table}[ht]
  \centering
  \small
  \caption{Static prior-adjusted retrieval at $K=50$ as the restored recent-prior coefficient $\beta$ varies. Values are means over three seeds. The sweep is descriptive and exposes the serving trade-off rather than selecting a universally optimal coefficient.}
  \label{tab:beta-sensitivity}
  \begin{tabular}{rrrrr}
    \toprule
    $\beta$ & Macro Recall & Macro NDCG & Coverage & Novelty (bits) \\
    \midrule
    0.00 & 0.35215 & 0.16986 & 0.99462 & 14.4459 \\
    0.25 & 0.36809 & 0.19622 & 0.92832 & 14.3290 \\
    0.50 & 0.36840 & 0.20539 & 0.80362 & 14.2219 \\
    0.75 & 0.36535 & 0.21003 & 0.73942 & 14.1279 \\
    1.00 & 0.36553 & 0.21350 & 0.72612 & 14.0389 \\
    \bottomrule
  \end{tabular}
\end{table}

\section{Remaining Real-Data Ablations}

The present experiment identifies the static mechanism but leaves four important extensions. First, dynamic prior-adjusted softmax should be compared with DPA-style dynamic NCE under identical time windows. Second, the complete 5.19-million-event strict dataset should replace the medium subset. Third, Transformer, recurrent, graph, and pooling encoders should be compared while holding the decoder and supervision fixed. Fourth, temporal-window, smoothing, and exposure-propensity ablations should test prior misspecification. These studies should retain overall and head/middle/tail Recall@K and NDCG@K, catalog coverage, novelty, the diagnostics in \cref{eq:empirical-pop-direction,eq:empirical-rank-one}, factor balancedness, and the singular spectrum of the daily log-marginal matrix.

\end{document}